\definecolor{Gray}{gray}{0.76}
\theoremstyle{plain}
\newtheorem{theorem}{Theorem}
\theoremstyle{definition}
\theoremstyle{remark}
\icmltitlerunning{Unleashing Graph Partitioning for Large-Scale Nearest Neighbor Search}
\newcommand{\emp}[1]{{\textbf{\textit{#1}}}}
\newcommand{\poly}{\operatorname{poly}}
\newcommand{\R}{\mathbb{R}}
\newcommand{\prb}[2]{\mathop{{\bf Pr}}_{#1}\left[ #2 \right]}
\newcommand{\MST}[1]{\operatorname{MST}}
\newcommand{\bPi}{\boldsymbol{\Pi}}
\newcommand{\cH}{\mathcal{H}}
\newcommand{\cX}{\mathcal{X}}
\newcommand{\Partition}{\boldsymbol{\Pi}}
\renewcommand{\epsilon}{\varepsilon}
\newcommand{\eps}{\epsilon}
\newcommand{\answerTODO}[1][]{\textcolor{red}{\bf [TODO]}}
\newcommand{\lshrouting}[0]{\textsc{hRt}\xspace}
\newcommand{\kmeansrouting}[0]{\textsc{kRt}\xspace}
\newcommand{\ranking}[0]{\textsc{MinDist}\xspace}
\newcommand{\voting}[0]{\textsc{Voting}\xspace}
\begin{document}

\twocolumn[
\icmltitle{Unleashing Graph Partitioning for Large-Scale Nearest Neighbor Search}



\icmlsetsymbol{equal}{*}

\begin{icmlauthorlist}
\icmlauthor{Lars Gottesbüren}{kit}
\icmlauthor{Laxman Dhulipala}{umd,goog}
\icmlauthor{Rajesh Jayaram}{goog}
\icmlauthor{Jakub Lacki}{goog}
\end{icmlauthorlist}

\icmlaffiliation{kit}{Karlsruhe Institute of Technology. Work done while the first author was a student researcher at Google Research}
\icmlaffiliation{umd}{University of Maryland}
\icmlaffiliation{goog}{Google Research}

\icmlcorrespondingauthor{Lars Gottesbüren}{lars.gottesbueren@kit.edu}

\icmlkeywords{Machine Learning, ICML}

\vskip 0.3in
]



\printAffiliationsAndNotice{}  

\begin{abstract}

We consider the fundamental problem of decomposing a large-scale approximate nearest neighbor search (ANNS) problem into smaller sub-problems. The goal is to partition the input points into neighborhood-preserving \emph{shards}, so that the nearest neighbors of any point are contained in only a few shards.
When a query arrives, a \emph{routing} algorithm is used to identify the shards which should be searched for its nearest neighbors.
This approach forms the backbone of distributed ANNS, where the dataset is so large that it must be split across multiple machines.



In this paper, we design simple and highly efficient routing methods, and prove strong theoretical guarantees on their performance. A crucial characteristic of our routing algorithms is that they are inherently modular, and can be used with \emph{any} partitioning method. This addresses a key drawback of prior approaches, where the routing algorithms are inextricably linked to their associated partitioning method.
In particular, our new routing methods enable the use of \emph{balanced graph partitioning}, which is a high-quality partitioning method without a naturally associated routing algorithm.
Thus, we provide the first methods for routing using balanced graph partitioning that are extremely fast to train, admit low latency, and achieve high recall.
We provide a comprehensive evaluation of our full partitioning and routing pipeline on billion-scale datasets, where it outperforms existing scalable partitioning methods by significant margins, achieving up to 2.14x higher QPS at 90\% recall$@10$ than the best competitor.


\end{abstract}

\section{Introduction}\label{sec:intro}
Nearest neighbor search is a fundamental algorithmic primitive that is employed extensively across several fields including computer vision, information retrieval, and machine learning \cite{shakhnarovich2008nearest}.
Given a set of points $P$ in a metric space $(\cX, d)$ (where $d \colon \cX \times \cX \mapsto \mathbb{R}_{\geq 0}$ is a distance function between elements of $\cX$), the goal is to design a data structure which can quickly retrieve from $P$ the $k$ closest points from any query point $q \in \cX$.
The typical quality measure of interest is \emph{recall}, which is defined as the fraction of true $k$ nearest neighbors found.

Solving the problem with perfect recall in high-dimensional space effectively requires a linear scan over $P$ in practice and in theory~\cite{andoni2017optimal}.
Therefore, most work has focused on finding \emph{approximate} nearest neighbors (ANNS).
The most successful methods are based on quantization~\cite{ivf-pq, scann} and pruning the search space using an index data structure to avoid exhaustive search.
Widely employed index data structures are k-d trees, k-means trees~\cite{k-means-tree, spann, flann}, graph-based indices such as DiskANN/Vamana~\cite{diskann} or HNSW~\cite{hnsw-journal} and flat inverted indices~\cite{indyk1998approximate, neural-lsh, pyramid, multi-probe-lsh, ivf-pq}.
For a comprehensive review we refer to one of the several surveys on ANNS~\cite{andoni2018approximate, li2019approximate, bruch2024foundations}.

While graph indices offer the best recall versus throughput trade-off, they are restricted to run on a single machine due to fine-grained exploration dependencies and thus communication requirements.
If a distributed system is needed, the currently best approach is to decompose the problem into a collection of smaller ANNS problems that fit in memory.
The decomposition starts by performing \emph{partitioning} -- splitting the set $P$ into some number of subsets (the \emph{shards}).
Then, to query for a point $q \in \cX$, a lightweight \emph{routing} algorithm is used to identify a small subset of the shards to search for the nearest neighbors of $q$.
The search within a shard is then performed using an in-memory solution such as a graph index if the shards are large (e.g., in distributed ANNS), or exhaustive search if the shards are small.
This approach is known as IVF and is also popular as an in-memory data structure~\cite{ivf-pq, ivf-hnsw, scann}.
Optimizing the partition is crucial to achieving good query performance, because it allows querying only a small subset of the shards. 

The partitioning and routing methods are closely connected, and, in some cases, inextricably intertwined.
As a result, many existing routing methods require a particular partitioning method to be used.
For example, consider partitioning via k-means clustering.
Each shard (a $k$-means cluster) is naturally represented by the cluster center.
The folklore \emp{center-based routing} algorithm identifies the shards whose centers are closest to the query point to be searched, i.e., solves a much smaller ANNS problem.


It has been recently observed that very high quality shards can be obtained using the following method by \cite{neural-lsh} based on \emph{balanced graph partitioning} (GP).
Let $G$ be a $k$-nearest neighbor ($k$-NN) graph of $P$.
That is, a graph whose vertices are points in $P$, in which each point has edges to its $k$ nearest neighbors in $P$.
The shards are computed by partitioning the vertices into roughly equal size sets, such that the number of edges which connect different sets is approximately minimized.
More formally, compute a partition $\Partition \colon V \mapsto [s]$ of $G$ into $s \in \mathbb{N}$ shards of size $|\Partition^{-1}(i)| \leq \frac{(1+\epsilon)|P|}{s} \, \, \forall
i \in [s]$, while minimizing $| \{ (u,v) \in E \mid \Partition(u) \neq \Partition(v) \}|$.
This attempts to put the maximum possible number of nearest neighbors of each vertex $v$ in the same shard as $v$.
While this approach delivers significantly improved recall, it comes with two main limitations, which have hindered its adoption in favor of the widely used k-means clustering.

\begin{enumerate}[topsep=1pt,itemsep=1pt,parsep=0pt,leftmargin=12pt]
	\item The shards do not admit natural geometric properties, such as convexity of shards computed by k-means clustering. As a result, there is no obvious method to route query points to shards. In fact, the best known routing method requires training a computationally expensive neural model~\cite{neural-lsh}.

	\item A $k$-NN graph is required to partition the pointset. This is computationally expensive and seemingly introduces a chicken and egg problem, as computing a $k$-NN graph requires solving the ANNS problem.
\end{enumerate}

In this paper, we show how to address the above limitations and enable the benefits of balanced graph partitioning for billion-scale ANNS problems.

\noindent
\textbf{Contribution 1: Fast, Inexact Graph Building.}
We empirically demonstrate that a very rough approximate $k$-NN graph built by recursively performing {\em dense ball-carving} suffices to obtain good quality partitions and query performance, and that such a coarse approximation can be computed quickly.


\noindent
\textbf{Contribution 2: Fast, Accurate, and Modular Combinatorial Routing.}
We devise two combinatorial routing methods which are fast, high quality, and can be used with \emph{any} partitioning method, in particular with graph partitioning.
Both adapt center-based routing to large shards, which is needed for distributed ANNS.
Our first and empirically stronger routing method is called \kmeansrouting. It is based on sub-clustering the points within each shard using hierarchical $k$-means.
We use either the tree representation of the clustering or HNSW to retrieve center points, routing the query to the shards associated with the closest retrieved points.
Our second method called \lshrouting is based on locality sensitive hashing (LSH). 
It works by locating the query point in the sorted ordering of LSH compound hashes, and inspecting nearby points to determine which shards to query.


\noindent
\textbf{Contribution 3: Theoretical Guarantees for Routing.} We provide a theoretical analysis of two variants of \lshrouting, and establish rigorous guarantees on their performance.
Specifically, we prove that each query point will be routed to a shard containing one of its approximate nearest neighbors with high probability.
To the best of our knowledge, these are the first theoretical guarantees for the routing step.


\noindent
\textbf{Contribution 4: Empirical Evaluation.}
In our extensive evaluation, we demonstrate that shards obtained via balanced graph partitioning attain 1.19x - 2.14x higher throughput than the best competing partitioning method.
We analyze the shards and find that the concentration of ground-truth neighbors in the top-ranked shard per query is significantly higher (up to +25\%).
Our routing methods are multiple orders of magnitude faster to train than the existing neural network based approaches~\cite{neural-lsh, fac} while obtaining similar or better recall at similar or lower routing time.
Specifically, our \kmeansrouting can be trained in half an hour on billion-point datasets, compared to multiple hours required by the neural network approaches on 1000x smaller million-point datasets where \kmeansrouting training takes under a second.


\section{Partitioning}\label{sec:sharding}
In this section, we present two improvements to the partitioning method of~\cite{neural-lsh}.
To speed up the $k$-NN graph construction we present a highly scalable approximate algorithm in Section~\ref{sec:approximate-graph-building}.
Moreover, we propose an algorithm to compute overlapping shards in Section~\ref{sec:overlapping}, to prevent losses in boundary regions.

\subsection{Approximate $k$-NN Graph-Building}\label{sec:approximate-graph-building}

To speed up $k$-NN graph building, we use a simple approximate algorithm based on recursive splitting with \emp{dense ball clusters}.
While the number of points is not sufficiently small for all-pairs comparisons, we split them using the following heuristic.
We sample a small number of pivots from the pointset, and assign each point to the cluster represented by its closest pivot.
The clusters are treated recursively, either splitting them again or generating edges for all pairs in the cluster.
The intuition is that top-$k$ neighbors will be clustered together with good probability. 
Improved graph quality is attained via independent repetitions, and assigning points to multiple closest pivots.
The latter can only be done a few times -- we do it once at the first recursive split -- otherwise the runtime would grow exponentially.

\begin{figure}
	\includegraphics[width=\linewidth]{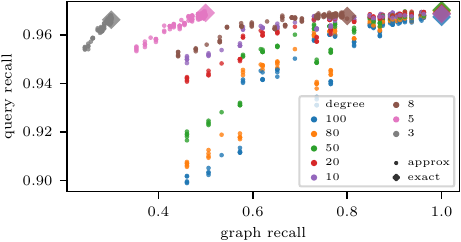}
	\vspace{-.7cm}
\caption{The $x$-axis shows the recall of the approximate $k$-NN graph used for graph partitioning. The $y$-axis shows the average recall of queries for $10$ nearest neighbors when only a single shard is inspected. The plot is computed using the SIFT1M dataset.
}\label{plot:graph_quality}
\end{figure}

Using a rough approximate $k$-NN graph for partitioning the dataset is acceptable since only coarse local structure must be captured to ensure partition quality, i.e., the edges need only connect near neighbors together, not necessarily the exact top-$k$.
Figure~\ref{plot:graph_quality} demonstrates that even low-quality graphs (graph recall $< 0.3$) lead to high query recall: more than 96\% of top-10 neighbors are concentrated in one shard per query.
Here we use $s=16$ shards with $|S_i| \leq 65625$ points per shard ($\epsilon=5\%$).
To obtain approximate graphs with different quality scores, we vary the number of repetitions and closest pivots, the cluster size threshold for all-pairs comparison, and the degree.

We also observe that using higher degrees may in some cases lead to slightly worse query recall. We suspect that since we use a low-quality method, adding more approximate neighbors pollutes the graph structure: note that this effect does not appear with an exact $k$-NN graph computed via all-pairs comparison.
Notably, the query performance between approximate and exact graphs differs only by a small margin.
Overall, these observations justify the use of highly sparse approximate graphs with only a few edges per point for the partitioning step, without sacrificing query performance.

\subsection{Partitioning into Overlapping Shards}\label{sec:overlapping}

When partitioning into disjoint shards, we can incur losses on points on the boundary between shards, whose $k$-NNs straddle multiple shards.
To address this issue, we propose a greedy algorithm inspired by local search for graph partitioning~\cite{fm} which eliminates cut edges by replicating nodes.
The set of cut ($k$-NN) edges is precisely the recall loss when the points themselves are queries~\cite{neural-lsh}, thus minimizing cut edges is a natural strategy in our setting.

We introduce a new parameter $o \geq 1$ to restrict the amount of replication.
For a fair comparison with disjoint shards in memory-constrained settings, we keep the maximum shard size $L_{\max}(s)$ fixed and instead increase the number of shards to $s' = o \cdot s$.
We first compute a disjoint partition into $s'$ shards of size $\frac{(1+\epsilon)|P|}{s'}$ and then apply an overlap algorithm with $\frac{(1+\epsilon)|P|}{s}$ as the final size.

In each step, our \emp{overlap algorithm} takes a node $u$ and places it in the shard $S_i$ that contains the plurality of its neighbors and not $u$.
This increases the average recall by $\frac{|\mathrm{cut}(u, S_i)|}{k |P|}$.
We repeat until there is no more placement into a below size-constraint shard which removes at least one edge from the cut.
We greedily select the node whose placement eliminates the most cut edges for the next step.

To parallelize this seemingly sequential procedure, we observe that nodes whose placement removes the same number of cut edges can be placed at the same time, i.e., grouped into bulk-synchronous rounds.
Only few rounds are necessary since each node removes at least one cut edge, no new cut edges are added, and $k$-NN graphs have small degree.

\section{Routing}\label{sec:routing}
\begin{figure}
\begin{center}
	\includegraphics[width=0.8\linewidth]{}
	\end{center}
	\vspace*{-0.4cm}
	\caption{Illustration of an example where routing using a single center per shard fails. The nearest neighbors of $q$ are in the cluster of $c_2$, but $d(q, c_1) < d(q, c_2)$. If  the hierarchical sub-clusters are represented with their own centers, the routing works correctly.
	}\label{fig:aggregate-example}
\end{figure}

The goal of routing is to identify a \emph{small number of} shards which contain a large fraction of the $k$ nearest neighbors of a given query $q \in \cX$.

We adapt center-based routing for our purpose.
When using IVF for $|P| = 10^9$ points, a typical configuration uses $s \approx 10^6$ shards, with $|S_i| \approx 10^3$ points each~\cite{ivf-hnsw}.
Here, using a single center per shard works well. 
In contrast, in our setup we have few shards $s \in [10,100]$ of large size $|S_i| > 10^7$ and use a graph index per shard.

Large shards cannot be as easily represented with a single center.
Figure~\ref{fig:aggregate-example} illustrates the problem where routing fails because sub-cluster structures are not represented well by the center.
Moreover, the shards from graph partitioning are not connected convex regions as in $k$-means clustering, which further aggravates this issue.
For example, routing with a single center resulted in only 28\% recall in the top-ranked shard on the MS Turing dataset, as opposed to 81\% with our best approach.

Instead of a single center, we represent each shard $S_i \subset P$ by multiple points $R_i$ with the goal of accurately capturing sub-cluster structure.
At query time, given a query point $q \in \cX$ we retrieve from $R = \cup_{i \in [s]} R_i$ a set $Q$ of (approximately) closest points from $q$.
Then, we route the query to the shards which the points of $Q$ belong to.
More precisely, we compute a probe order of the shards, ranking each shard $i$ based on the distance between $q$ and the closest point in $Q \cap R_i$.
The size of $R$ is limited to a parameter $m$ so that the routing index fits in memory.

Our routing algorithms differ in the way the coarse representatives $R_i$ are constructed at training time and by the ANNS retrieval data structure used to determine $Q$.
The first algorithm uses hierarchical $k$-means to coarsen $P$ to $R$ and uses the resulting $k$-means tree for retrieval.
This method is called \emp{\kmeansrouting} for $k$-means RoutTing. In the experiments we also build an HNSW index on all points of $R$ for even faster retrieval.
This is called \emp{\kmeansrouting + \textsc{hnsw}}.
The second algorithm called \emp{\lshrouting} (HashRouTing) uses uniform random sampling to coarsen and a variant of LSH Forest~\cite{lsh-forest} to find $Q$.
In Section~\ref{sec:sortingLSH} we provide theoretical bounds for the quality of \lshrouting, and later demonstrate that both methods perform well in practice, with \kmeansrouting having a sizeable advantage.

\subsection{K-Means Tree Routing Index: \kmeansrouting}

\begin{algorithm2e}
\caption{\kmeansrouting: Training}\label{algo:tree:build}
\KwIn{Shards $S_i$, number of centroids $l$, index size $m$, cluster size $\lambda$}
\ParallelFor() {$i=1$ to $s$}{
    create root node $v_i$\;
    \FuncSty{Build}$(S_i, l, \frac{|S_i|(m-s)}{|P|}, \lambda, v_i)$\;
}
\SetKwFunction{Build}{Build}
\SetKwProg{Fn}{func}{:}{}
\Fn{\Build{$P$, $l$, $m$, $\lambda$, $v$}}{
    \lIf() {$m \leq 1$} { \KwRet }
    $\mathrm{centroids}(v) \gets \FuncSty{K-Means}(P, l)$ \;
    \ParallelFor() {$c \in \mathrm{centroids(v)}$} {
        $P_c \gets$ $k$-means cluster of $c$ \;
        create new tree-node $v_c$\;
        add tree-edge from parent $v$ to $v_c$ \;
        \If() {$|P_c| > \lambda$}{
            \FuncSty{Build}$(P_c, l, \frac{(m - l)|P_c|}{|P|}, \lambda, v_c)$
        }
    }
}

\end{algorithm2e}

\begin{algorithm2e}
\caption{\kmeansrouting: Routing}\label{algo:tree:routing}
\KwIn{Search budget $b$, query point $q$}
$\mathrm{PQ} \gets \{ (v_i, 0, i) \mid i \in [s] \}$ \tcp*[h]{min-heap with (tree node, key, shard ID)} prioritized by key\;
min-dist$[i] \gets \infty \quad \forall i \in [s]$ \;
\While() {$\mathrm{PQ}$ not empty and $b {--} > 0$} {
    $(v, d_v, s_v) \gets \mathrm{PQ}\FuncSty{.pop()}$\;
    \For() {$c \in \mathrm{centroids}(v)$}{
    	min-dist$[s_v] \gets \min(\text{min-dist}[s_v], d(q,c))$ \;
        \If() {$c$ has a sub-tree} {
            add $(v_c, d(q,c), s_v)$ to $\mathrm{PQ}$ \;
        }
    }
}
\KwRet sort $[s]$ by min-dist \;
\end{algorithm2e}

Algorithm~\ref{algo:tree:build} shows the training stage for \kmeansrouting.
We create one root node $v_i$ per shard $S_i$ and hierarchically construct a separate $k$-means tree for each shard.
The set $R_i$ consists of centroids across all recursion levels of hierachical $k$-means.
Per level we use $l=64$ centroids for the sub-tree roots.
Furthermore, we are given a maximum index size $m$, which we split proportionately among sub-trees according to their cluster size.
We subtract $l$ from the budget on each level, to account for the centroids.
The recursive coarsening stops once the budget $m$ is exceeded or the number of points is below a cluster size threshold $\lambda$ (we use 350).
In contrast to usual $k$-means search trees, we do not store or search the points in leaf-nodes, as our goal is to coarsen the dataset.

Algorithm~\ref{algo:tree:routing} shows the routing algorithm which is similar to beam-search~\cite{norvig-russell}.
At a non-leaf node we score the centroids against the query $q$ and insert the non-leaf children with the associated centroid distance into a priority queue for further exploration.
Initially the priority queue contains the tree roots.
The search terminates when either the priority queue is empty or a search budget $b$ (a parameter) is exceeded.

\subsection{Sorting-LSH Routing Index: \lshrouting}\label{sec:sortingLSH}

We now describe a routing scheme based on \textit{Locality Sensitive Hashing} (LSH). At a high level, an LSH family $\cH$ is a family of hash functions of the form $h:\cX \to \{0,1\}$, such that similar points are more likely to collide; namely, the probability $\prb{h \sim \cH}{h(x) = h(y)}$ should be large when $x,y \in \cX$ are similar, and small when $x,y$ are farther apart. We formalize this in the proof of Theorem \ref{thm:SortLSHMain}, and describe the routing index assuming we have an LSH family.


We now describe the construction of a SortingLSH index.
We first subsample $m$ points $R$ from $P$ uniformly at random. A single SortingLSH index is created as follows: we hash each point $x \in R$ multiple times via independent hash functions from $\cH$, and concatenate the hashes into a string $(h_1(x),h_2(x),\dots,h_t(x))$. Importantly, we use the same hash functions $h_1,\dots,h_t$ for each point in $R$. We \textit{sort} the points in $R$ \textit{lexicographically based on these strings of hashes}. The index is the points stored in sorted order along with their hashes. Intuitively, similar points are more likely to collide often, thus share a longer prefix in their compound hash string, and thus are more likely to be closer together in the sorted order. We repeat the process $r$ times to improve retrieval quality (up to 24 in our experiments).


The routing procedure works as follows. Given a query point $q$, for each of the $r$ indices, we hash $q$ with the LSH functions used for that repetition, compute the compound hash string $h(q) = (h_1(q),\dots,h_t(q))$ for $q$, and then find the position $\tau \in [m]$ that is lexicographically closest to $h(q)$.
We retrieve the window $[\tau - W, \tau + W]$ in the sorted order to consider these points' distances to the query in the ranking.
The pseudocodes for index construction and routing are given in Appendix~\ref{app:codes}.

One key advantage of employing LSH is that we can prove formal guarantees for the retrieved points.
Since our approach is more involved than searching through a single set of hash buckets, we provide a new analysis which demonstrates that it provably recovers a set of relevant points $Q$ that results in routing to a shard which contains an approximate nearest neighbor. We remark that, while the routing only guarantees we are routed to a shard containing an \textit{approximate} nearest neighbor, under mild assumptions on the partition, most of the approximate nearest neighbors of a given point will be in the same shard as the closest point. Under such a natural assumption, our SortingLSH index also recovers the true nearest neighbors.

\begin{theorem}\label{thm:SortLSHMain}
	Fix any approximation factor $c>1$, and let $P \subset (\R^d,\|\cdot\|_\rho)$ be a subset of the $d$-dimensional space equipped with the $\ell_\rho$ norm, for any $\rho \in [1,2]$. Set stretch factor $\alpha = O(c)$, repetitions $r = O(n^{1/c})$ and window size $W = O(1)$.
	Then the following holds for any query $q \in \R^d$:
	(a) If $m=n$, then with probability $1-1/\poly(n)$ $q$ gets routed to a shard $\bPi(p)$ containing at least one point $p \in P$ that satisfies
		$ \|p - q \|_\rho \leq \alpha \pi_1(q) $.
	(b) If $m<n$, then for any $\delta \in (0,1)$, with probability $1-\delta$ the query $q$ gets routed to a shard $\bPi(p)$ containing at least one point $p \in P$ that satisfies
		$ \|p - q \|_\rho \leq \alpha \cdot \pi_{\lceil \log\delta^{-1}  \frac{n}{m}\rceil  }(q) $
	. Here, $\pi_k(q)$ is the distance from $q$ to the $k$-th nearest neighbor of $q$ in $P$.

\end{theorem}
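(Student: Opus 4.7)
The plan is to combine standard LSH machinery with a subsampling argument, adapted to the sorted-window retrieval that distinguishes SortingLSH from classical bucket-based LSH. I instantiate an LSH family $\cH$ for $(\R^d, \|\cdot\|_\rho)$ at scale $\beta$ (e.g.,\ $p$-stable hashing for $\rho \in [1,2]$) with collision probabilities $p_1 = \Pr_{h \sim \cH}[h(x)=h(y) \mid \|x-y\|_\rho \leq \beta]$ and $p_2 = \Pr_{h \sim \cH}[h(x)=h(y) \mid \|x-y\|_\rho \geq \alpha\beta]$ satisfying $\log(1/p_1)/\log(1/p_2) \leq 1/c$, and pick the concatenation length $t$ so that $n p_2^t = \Theta(1)$, which forces $p_1^t = \Omega(n^{-1/c})$. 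For part (a) I set $\beta = \pi_1(q)$ and take $p^\dagger$ to be the nearest neighbor of $q$ in $P$. For part (b), a direct subsampling calculation---$(1 - m/n)^k \leq e^{-km/n} \leq \delta/2$ with $k = \lceil \log\delta^{-1} \cdot n/m \rceil$---shows that with probability $1-\delta/2$, at least one of the top-$k$ nearest neighbors of $q$ in $P$ lies in the subsample $R$; I let $p^\dagger$ denote any such neighbor and set $\beta = \|p^\dagger - q\|_\rho \leq \pi_k(q)$.

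I then analyze a single SortingLSH repetition. By the LSH property, $\Pr[h(p^\dagger)=h(q)] \geq p_1^t = \Omega(n^{-1/c})$, while the expected number of ``far'' points in $R$ (those with $\|p-q\|_\rho > \alpha\beta$) that share $q$'s full $t$-prefix is at most $n p_2^t = O(1)$. The main claim is that when both events occur simultaneously, the window $[\tau - W, \tau + W]$ around $q$'s insertion position $\tau$ in sorted order contains at least one point at distance $\leq \alpha\beta$ from $q$. The reason is that any point lex-strictly between $p^\dagger$ and $\tau$ must share the full $t$-prefix with $q$, i.e.,\ sit in $q$'s LSH bucket; the far colliders in this bucket number at most $W$ with constant probability by Markov, so either the bucket is small enough that $p^\dagger$ itself lies within $W$ positions of $\tau$, or the surplus bucket mass consists of near colliders (distance $\leq \alpha\beta$) which are themselves acceptable replacements within the window. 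In either case the window contains some $p$ with $\|p - q\|_\rho \leq \alpha\beta$, and $\bPi(p)$ enters $Q$ and hence the probe order.

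Finally I boost via $r = O(n^{1/c})$ independent repetitions, folding a $\log n$ factor into the asymptotic notation as needed: the probability that no repetition succeeds is at most $(1 - \Omega(n^{-1/c}))^r \leq 1/\poly(n)$ for part (a), or $\leq \delta/2$ for part (b), which combined with the subsampling failure in (b) yields overall success probability at least $1-\delta$. The main obstacle is the window-vs-bucket step in the second paragraph: classical LSH analyses enumerate an entire hash bucket, whereas SortingLSH inspects only a constant-sized contiguous window, so one must argue carefully that either the bucket fits inside the window (so $p^\dagger$ is present) or the surplus in the bucket consists of near, hence acceptable, points---ruling out the bad case where far colliders in the bucket push $p^\dagger$ out of the window without any near-point replacement. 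This case analysis is the delicate technical heart of the proof and relies on the specific sorted-order structure combined with Markov-type bounds on the number of far collisions.
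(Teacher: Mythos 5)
Your high-level strategy --- an LSH family with collision probabilities $p_1,p_2$, a window argument replacing classical bucket enumeration, independent repetitions to boost success, and a subsampling argument for $m<n$ --- has the same skeleton as the paper's proof. The subsampling reduction in part (b) is essentially identical, and the key structural observation of the window argument (that any point lexicographically between $p^\dagger$ and $\tau$ must share the relevant hash prefix with $q$, hence ``collide'') is the same one the paper exploits.

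There is, however, a genuine gap in your construction of the hash family. You instantiate $p$-stable hashing at scale $\beta = \pi_1(q)$ (or $\pi_k(q)$ for part (b)) and then choose the concatenation length $t$ so that $n p_2^t = \Theta(1)$. Both the hash width governing $p_1,p_2$ and the integer $t$ therefore depend on $\pi_1(q)$ --- a quantity that varies across queries and that the algorithm cannot know when building the index. Classical bucket-based LSH handles this by maintaining $O(\log \Phi)$ separate data structures at geometrically spaced scales ($\Phi$ being the aspect ratio), but you never invoke this machinery, and the entire point of SortingLSH is to avoid it with a single sorted index. The paper achieves scale-freeness differently: it first embeds $P$ into the Hamming cube with constant distortion, then uses bit-sampling hashes of length $\ell = O(d'\log n)$, and --- crucially --- carries out the analysis at a data-dependent prefix length $t^* = \Theta\bigl(d'\ln n / (c\,\|p^*-q\|_0)\bigr)$. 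Because lexicographic sorting respects all prefix lengths simultaneously, the algorithm never needs to know $t^*$; only the analysis does. Your analysis, which treats the index as a fixed bucket structure at a single scale, does not have this property and would fail for queries whose nearest-neighbor distance differs substantially from your fixed $\beta$.

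A secondary, smaller issue is the looseness of the Markov step. You claim the far colliders in the bucket number at most $W$ with constant probability, and conclude the window must contain a near point. But the window holds exactly $W$ points on each side of $\tau$, so if exactly $W$ far colliders land between $\tau$ and $p^\dagger$ they can fill the window and exclude both $p^\dagger$ and every near point. You would need slack, e.g.\ expected far colliders $\le W/3$ so that Markov gives strictly fewer than $W$ with constant probability, and you would also need this to hold jointly with (or independently of) the event that $p^\dagger$ collides. The paper sidesteps this by showing each far point matches the $t^*$-prefix with probability at most $1/n^4$, so that union-bounding over all $n$ points and $r<n$ repetitions gives \emph{zero} far matches across all repetitions with probability $1 - 1/n^3$; the window argument is then immediate rather than requiring a surplus accounting.
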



In addition to routing to the closest shard by distance, we also investigated a natural scheme where points in $Q$ vote for their shard with voting strength proportional to their distance.
This is helpful in scenarios where the majority of nearest neighbors are concentrated in one shard, but the top-1 neighbor is located in another.
Since the voting scheme did not achieve better empirical results, we present the details in~\cref{sec:appendix:voting}.

\section{Empirical Evaluation}\label{sec:experiments}
\begin{figure*}[t]
	\includegraphics[width=\textwidth]{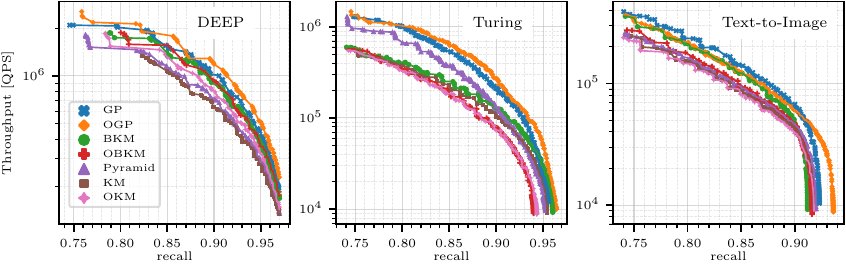}
	\vspace*{-0.8cm}
	\caption{Throughput vs recall evaluation on big-ann-benchmarks.
	}\label{fig:qps}
\end{figure*}

\newcommand{\nshards}{\eta}

We evaluate the performance of IVF algorithms for approximate nearest neighbor search using different partitioning and routing algorithms in terms of the recall of retrieving top-10 nearest neighbors versus number of shards probed $\nshards$ and throughput (queries-per-second).
For $k$-NN graph-building we also use $k=10$.
We test on the datasets \emph{DEEP} ($\mathrm{dim}=96$, $\ell_2$), \emph{Text-to-Image} ($\mathrm{dim}=200$, inner product, abbreviated as TTI) and \emph{Turing} ($\mathrm{dim}=100$, $\ell_2$) from big-ann-benchmarks~\cite{big-ann-benchmarks}, which have one billion points each with real-valued coordinates.
These are some of the {\em most challenging billion-scale ANNS datasets} that are currently being evaluated by the community. TTI (which uses a dual-encoder model) even exhibits out-of-distribution query characteristics, as demonstrated in~\cite{ood-diskann}.

We compare graph partitioning (GP) with three prior scalable partitioning methods: $k$-means (KM) clustering, balanced $k$-means (BKM) clustering~\cite{de2023balanced}, and Pyramid~\cite{pyramid}.
We allow $\epsilon=5\%$ imbalance of shard sizes for all these algorithms.
We also consider overlapping partitioning with 20\% replication ($o=1.2$) and prefix the corresponding method name with O (OGP, OKM, OBKM).
OKM and OBKM create overlap by assigning points to second-closest clusters (etc.) as proposed by~\cite{spann}.
We use $40$ shards in the non-overlapping case and $48$ shards in the overlapping case (so that the maximum shard size is the same in both cases).

We implemented our algorithms and baselines in a common framework in C++ for a fair comparison.
Our source code is available at \url{https://github.com/larsgottesbueren/gp-ann}.
We note that the original source code for Pyramid \cite{pyramid} is not available, and the original source code for BKM~\cite{de2023balanced} is not parallel, which is prohibitive for us.
To compute graph partitions, we use KaMinPar by~\cite{kaminpar} which is the currently fastest algorithm.
For an overview of the field of graph partitioning, we refer to a recent survey~\cite{gp-survey}.
The query experiments are run on a 64-core (1 socket) EPYC 7702P (2GHz 1TB RAM), and the partitioning experiments are run on a 128-core (2 sockets, 16 NUMA nodes) EPYC 7713 (1.5GHz 2TB RAM).

Unless mentioned otherwise, all methods use \kmeansrouting + HNSW for routing, since it is also the generalization of $k$-means' native routing method for large shards.
Refer to Appendix~\ref{appendix:baselines} for further algorithmic details of the baselines.
Due to space constraints, we largely defer discussion of tuning parameters in the main text to Appendix~\ref{appendix:configuration}.

\subsection{Large-Scale Throughput Evaluation}

We start with an evaluation of recall vs throughput (queries per second) of approximate nearest neighbor queries.
Due to resource constraints we simulate distributed execution on a single machine, processing hosts one after another.
While this setup does not take into account networking latency, we note that its impact should be negligible:  communication latency in modern networks is less than 1 microsecond~\cite{InfiniBand} whereas HNSW search latency is around 1 millisecond or higher.
Moreover, our algorithms (and all IVF algorithms in general) perform the least amount of communication possible -- only the query vector as well as neighbor IDs and distances are transmitted.

We assume a distributed architecture where each machine hosts one shard and the routing index.
For routing, queries are distributed evenly.
The machine that receives a query forwards it to the  hosts that are supposed to be probed.

We use HNSW to search in the shards and to accelerate routing on the \kmeansrouting points.
In total we use $60$ hosts.
This is larger than the number of shards, as we replicate popular shards to counteract query load imbalance.
We process queries on 32 cores per host in parallel.
Throughput is calculated based on the maximum runtime of any host. Query routing is distributed evenly across hosts, whereas in-shard searches are only accounted for the queries that probe the shard on the host. Replicas of the same shard evenly distribute work amongst themselves.


To obtain different throughput-recall trade-offs, we try different configurations of \kmeansrouting (index size $m$) and HNSW (ef\_search), as well as number of shards probed $\nshards$ or the probe filter methods proposed by~\cite{pyramid} and~\cite{spann}, which determine a different $\nshards$ for each query.
For Pyramid, we included its native routing method.
We only plot the Pareto-optimal configurations, which is standard practice~\cite{douze2024faiss}.

\begin{table}
	\caption{Throughput in $10^3$ queries per second at $0.9$ recall.}\label{tbl:qps}
	\vspace{0.22cm}
	\resizebox{\linewidth}{!}
	{
		\begin{tabular}{l rrrrr}
			\toprule
			QPS $[\cdot 10^3]$ & GP      & OGP    & BKM   & OBKM & Pyramid      \\
			\midrule
			DEEP             & 1002.5    & \bf{1090.2} & 868.3 & 919  & 713.5   \\
			Turing           & 208.9     & \bf{271.9}  & 124   & 78   & 127.3   \\
			TTI              & \bf{67.4} & 59.9        & 46.2  & 43.9 & 45.3    \\
			\bottomrule
		\end{tabular}
	}
\end{table}

\begin{table}
	\caption{Partitioning times in minutes. GB = graph-building.}\label{tbl:partitioning-times}
	\vspace{0.22cm}
	\resizebox{\linewidth}{!}
	{
		\begin{tabular}{l | rrr | rr | r}
			\toprule
			& GB     &  GP   & OGP         & BKM      & OBKM    & Pyramid \\
			\midrule
			DEEP             & 63m    &  24m  & +10m         & 24m     & +16m       & 32m   \\
			Turing           & 69m    &  17m  & +11m         & 36m     & +18m       & 35m   \\
			TTI              & 107m   &  17m  & +10m         & 65m     & +34m       & 60m    \\
			\bottomrule
		\end{tabular}
	}
\end{table}

Figure~\ref{fig:qps} shows the recall vs throughput plot.
Additionally, Table~\ref{tbl:qps} shows the throughput for a fixed recall value of $0.9$.
GP outperforms all baselines, and does so by a large margin on Turing and TTI.
Adding overlap with OGP improves results further across the whole Pareto front on DEEP and Turing.
On Text-to-Image, OGP loses to GP on recall below $0.91$, but OGP can achieve higher maximum recall.
Note that the maximum recall is constrained by the in-shard HNSW configurations.
Overall GP or OGP improves QPS over the next best competitor at 90\% recall by 1.19x, 1.46x, and 2.14x respectively on DEEP, TTI and Turing.


\begin{figure*}[t]
	\includegraphics[width=\linewidth]{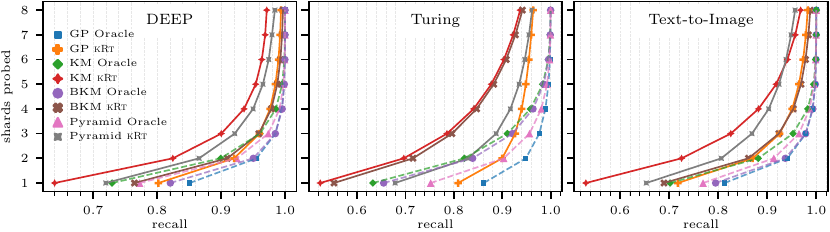}
	\vspace*{-0.8cm}
	\caption{Evaluating disjoint partitioning methods with \kmeansrouting and routing oracle (dashed), assuming exhaustive search in the shards.
	}\label{fig:compare-partitions}
\end{figure*}

\begin{figure}
	\includegraphics[width=\linewidth]{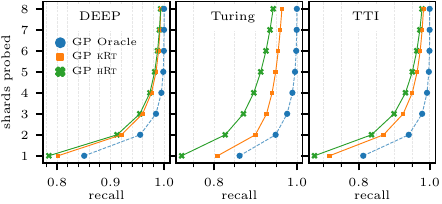}
	\vspace*{-0.8cm}
	\caption{\lshrouting vs \kmeansrouting with GP as the partition.}
	\label{fig:oracle-lsh}
\end{figure}

\begin{figure}
	\includegraphics[width=\linewidth]{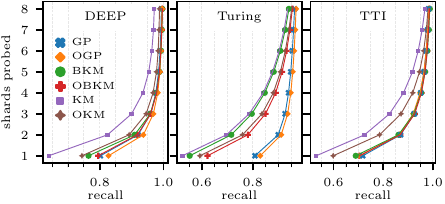}
	\vspace*{-0.6cm}
	\caption{Evaluating overlapping vs disjoint partitions with \kmeansrouting routing and exhaustive shard search.}
	\label{fig:compare-overlapping-partitions}
\end{figure}

\begin{figure}
	\includegraphics[width=\linewidth]{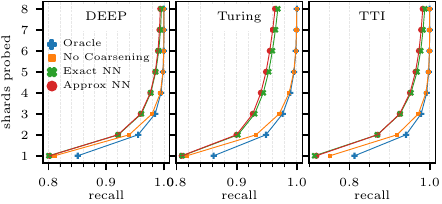}
	\vspace*{-0.6cm}
	\caption{Ablation study analyzing the sources of routing losses. Different curves correspond to different variants of a routing algorithm used. We note that all but \emph{Approx NN} are hypothetical and/or infeasible in practice.}
	\label{fig:analyzing-losses}
\end{figure}

\begin{figure}
	\includegraphics[width=\linewidth]{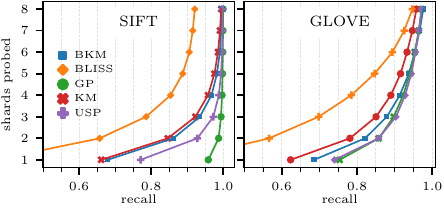}
	\vspace*{-.8cm}
	\caption{Comparison against neural network baselines.
	}\label{fig:small-scale}
\end{figure}

\subsection{Training Time}

In Table~\ref{tbl:partitioning-times} we report partitioning times.
The query performance of GP does come at the cost of a moderately increased partitioning time compared to the baselines.
This is due to the graph-building step (GB).
While previous works~\cite{neural-lsh, bliss, fac} identified graph partitioning as the bottleneck, we observed that it can be made very fast by using the KaMinPar~\cite{kaminpar} partitioner.
The overall partitioning times are quite moderate for datasets of this scale, and our implementations of the baselines are competitive.
The Pyramid paper reports a partitioning time of 87 minutes on a 500M sample of DEEP, demonstrating that our reimplementation is faster.
\cite{spann-build-times} report 4-5 days training time for their hierarchical balanced $k$-means tree SPANN.

While our graph-building implementation is already optimized, it can be further accelerated using GPUs~\cite{faiss-gpu} or more hardware-friendly implementation of bulk distance computations~\cite{scann}.

Coarsening with \kmeansrouting takes roughly 1000s on Turing, 800s on DEEP, and 2000s on TTI.
HNSW training in a shard of roughly 25M points takes 600s-1500s for Turing and DEEP, and 800s-1900s for TTI.
Note that these numbers apply to all baseline partitioners, as they use the same routing algorithms and in-shard search.
Training \lshrouting takes under 20 seconds.

\subsection{Analyzing Partitioning and Routing Quality}\label{sec:experiments:perfect-inshard-query}

To further assess the quality of the partitions and routing, we study recall independent of HNSW searches in the shards.
Specifically, we look at the number of probed shards versus the recall with exhaustive search in the shards, in order to assess the quality of the routing algorithm.
All recall values reported in Figures~\ref{fig:compare-partitions} to Figure~\ref{fig:analyzing-losses} and Section~\ref{sec:experiments:perfect-inshard-query} assume exhaustive search.
Furthermore, to assess the quality of the partition alone and provide context for routing, we look at recall with a \emph{hypothetical} routing oracle that knows the entire dataset and picks an optimal sequence of shards to probe, i.e., one that maximizes recall.

\noindent\textbf{Disjoint Partitions.}
Figure~\ref{fig:compare-partitions} shows the results for disjoint partitions with the oracle marked as dashed lines.
GP outperforms Pyramid, KM and BKM on all three datasets.
In particular on Turing the margin of improvement is very significant: $+25\%$ recall over BKM and $+13.4\%$ over Pyramid at $\nshards = 1$ probes with \kmeansrouting and $+11\%$, respectively $+20\%$ with the oracle.

In Figure~\ref{fig:oracle-lsh} we compare \lshrouting versus \kmeansrouting using GP as the partitioning method.
While \lshrouting performs decently, empirically \kmeansrouting consistently finds better routes which is why we excluded \lshrouting in Figure~\ref{fig:qps}.
This stems from uniform random sampling which is necessary for the theoretical analysis.

\noindent\textbf{Overlapping Partitions.}
Next, we investigate the effects of using 20\% overlap on in Figure~\ref{fig:compare-overlapping-partitions}.
Since we use more shards instead of increasing their size, it is unclear whether overlap leads to strictly better results.
Yet, we see consistent improvement of OKM over KM and OBKM over BKM, in contrast to the throughput evaluation in Figure~\ref{fig:qps}.
OGP also significantly improves upon GP on DEEP and Turing.
On TTI however, OGP has lower recall in the first shard than GP when using \kmeansrouting routing (70.6\% vs 71.7\%), even though the routing oracle achieves higher recall (85.5\% vs 81.2\%).
Overall these results demonstrate that overlapping shards can significantly boost recall.

\noindent\textbf{Losses.}
While our routing algorithms perform well, there is still a significant gap to the optimal routing oracle.
We leverage fast inexact components in two places.
We coarsen the pointset, and we use an approximate search index (HNSW) to speed up the routing query.
In the following, we thus investigate how much the inexact components contribute to the losses by replacing them with an exact version. We present the results in Figure~\ref{fig:analyzing-losses}.

We test routing based on the entire pointset to determine the effect of coarsening and study how ranking by distances loses compared to the oracle (No Coarsening).
Additionally, we test computing distances to all points in $R$ (Exact NN), to detect how much approximate search loses in routing by missing relevant points as employed in the base version of the algorithm (Approx NN).

At $\nshards=1$ probed shard, routing by ranking distances already loses significantly -- see Oracle against No Coarsening.
It is able to catch up at $\nshards=3$, suggesting that the distinction among the top 3 should be improved, but overall ranking distances is the correct approach.
At $\nshards > 1$ coarsening incurs the highest losses -- see No Coarsening vs Exact NN.
Fortunately, Approx NN incurs only a small loss against Exact NN; at $\nshards \geq 3$ on Turing and $\nshards \geq 5$ on TTI.

\subsection{Small-Scale Evaluation for Learned Partitions}\label{sec:experiments:small}

In this section, we compare with two additional baselines BLISS~\cite{bliss} and USP~\cite{fac}, which jointly learn the partition and routing index using neural networks.
Given a query, the neural net infers a probability distribution of the shards, which is interpreted as a probe order.
BLISS uses a cross-entropy loss formulation to maximize the number of points which have at least one near-neighbor in their shard, and uses the power of $k$ choices (top ranked shards) to achieve a balanced assignment.
USP uses a linear combination of edge cut and normalized squared shard size as the loss function.
Both methods rely on building a $k$-NN graph for their loss function.

Training BLISS and USP is too slow to run on the big-ann datasets, so we test on SIFT1M ($\mathrm{dim}=128$, $\ell_2$) and GLOVE ($\mathrm{dim}=100$, angular distance), which have roughly a million points, partitioned into $s=16$ shards.
Experiments are run on a 128-core EPYC 7742 clocked at 2.25GHz with 1TB RAM.
Queries are run sequentially, except for the batch-parallel results in \cref{tbl:routing-times-small-scale} which use 64 cores. Index training uses 128 cores.
We use the publicly available Python implementations in PyTorch (USP) and TensorFlow (BLISS).
The results for GP and (B)KM use \kmeansrouting routing.

\begin{table}
	\caption{Routing time in microseconds per query for different batch sizes $b$ on SIFT and GLOVE. A batch is processed in parallel.
	}\label{tbl:routing-times-small-scale}
	\vspace{0.22cm}
	\resizebox{\linewidth}{!}
	{
		\begin{tabular}{l | rrr | rrr}
			&  \multicolumn{3}{c}{SIFT} & \multicolumn{3}{c}{GLOVE} \\
			& $b=1$ & $b=32$ & $b=\infty$ & $b=1$ & $b=32$ & $b=\infty$ \\
			\midrule
			\kmeansrouting        & 95  & 13  & 2.4 & 92  & 11  & 2.1 \\
			\kmeansrouting + HNSW & 110 & 37  & 7.8 & 106 & 33  & 7.6 \\
			BLISS                 & 660 & 150 & 110 & 730 & 110 & 110 \\
			USP                   & 320 & 20  & 1.2 & 512 & 23  & 1.2 \\
		\end{tabular}
	}
\end{table}

Figure~\ref{fig:small-scale} shows the recall versus number of probed shards.
First, we observe that BLISS is completely outperformed on both datasets.
On SIFT, GP also significantly outperforms USP and (B)KM, whereas on GLOVE, USP and GP perform similarly.
This comparison uses exhaustive search in the shards, which takes 1.96ms per shard probe on SIFT. Alternatively, using HNSW takes 0.18ms per shard probe with near-equivalent recall (95.84\% vs 95.71\% in the first shard).
BLISS exhibits 13-15ms for exhaustive shard search, which is presumably an implementation issue of interfacing from Python to C++.
For USP we cannot get a clean measurement, because the shard-search is mixed with the recall calculation.
In \cref{tbl:routing-times-small-scale} we report the routing times.
USP and BLISS benefit from routing in batches, as PyTorch leverages parallelism and batched linear algebra operations.
Our approach similarly benefits from parallelism, but could be further improved via batched distance computation.
In contrast to the results on large-scale data, we observe that routing with HNSW is slower than with \kmeansrouting.

In terms of retrieval performance, USP is a good data structure.
However, it is extremely slow to train, taking 6 hours on 128 cores for GLOVE.
On the other hand BLISS is fast to train at 70s (neural network) + 566s ($k$-NN graph construction), but has poor retrieval performance.
We excluded Neural LSH~\cite{neural-lsh} from the retrieval comparison as it is already outperformed by USP, but remark that its neural net training took over two days.
Our method has the best retrieval performance at relevant batch sizes \emph{and} is also the fastest to train at just 4.76s, of which 2.95s are graph-building, 0.88s partitioning, and 0.93s \kmeansrouting.

\section{Conclusion}
We presented fast, modular and high-quality routing methods which unleash balanced graph partitioning for large-scale IVF algorithms.
Our benchmarks on billion-scale, high-dimensional data show that our methods achieve up to 2.14x higher throughput than the best baseline.
As a future work it would be interesting to explore accuracy and efficiency improvements for routing, for example exploring quantization to compress the routing index.
Another direction is to study the benefits of our approach for partitioning ANNS problems across different types of computing units, e.g. GPUs.
Additionally, we are interested in advanced partitioning cost functions that optimize locality in nearest neighbor search beyond the first shard.

%
%
\bibliography{references}
\bibliographystyle{icml2024}

\newpage
\appendix
\onecolumn

\section{\lshrouting pseudocodes}\label{app:codes}
\begin{algorithm2e}
	\caption{\lshrouting: Training}\label{algo:lsh:build}
	\KwIn{Shards $S_i$, index size $m \leq n$, repetition parameter $r$, sketch size $t$}
	\ParallelFor() {$i=1$ to $s$}{
		Sample $\lfloor \frac{m \cdot |S_i|}{|P|} \rfloor$ points $R_i$ from $S_i$ uniformly without replacement \;
	}
	$R = \cup_{i=1}^s R_i$\;
	\ParallelFor() {$j=1$ to $r$}{
		Draw independent random hash functions $h_{j,1},\dots,h_{j,t}$ from a LSH family for $\cX$ \;
		For each $x \in R$ define a sorting key $h_j(x) = (h_{j,1}(x),\dots,h_{j,t}(x))$

		Sort $R$ lexicographically with keys $h_j$ to obtain $I_j$

	}

\end{algorithm2e}
\begin{algorithm2e}
	\caption{\lshrouting: Routing}\label{algo:lsh:routing}
	\KwIn{Query point $q$, window size $W\geq 1$,  partition $\Partition' \colon R \mapsto [s]$ }
	min-dist$[i] \gets \infty \quad \forall i \in [s]$ \;
	\For() {$j=1$ to $r$ } {
		Binary search in $I_j$ for position $\tau$ of the point lexicographically closest to $h_j(q)$ \;
		\For() {$c \in I_j[\tau - W \colon \tau + W]$} {
			$\text{min-dist}[\Partition'[c]] \gets \min(\text{min-dist}[\Partition'[c]], d(q,c))$ \;
		}
	}
	\KwRet sort $[s]$ by min-dist \;

\end{algorithm2e}

\section{Routing via Voting}\label{sec:appendix:voting}

In this section, we present a method called \voting, which is an alternative to ranking shards by the distance of the coarse representative of the shard that is closest to the query, which we call \ranking in the following.

Let $Q$ be the set of relevant coarse representatives retrieved from the routing index for a query $q \in \cX$.
With \ranking the shards are probed in order of increasing minimum distance $\min_{v \in R_i}(d(q,v))$.
We expect many of $q$'s neighbors to be concentrated in the same shard because we optimized the partition for specifically this metric.
Therefore, routing to the shard of any \emph{near} neighbor is likely good; we use the closest one we find.

The following example demonstrates a flaw of \ranking, when this intuition does not hold.
If shard $S_a$ has 99 of the top-100 neighbors but $S_b$ has the top-$1$ neighbor (and assuming the routing index finds it) then \ranking inspects $S_b$ first whereas inspecting $S_a$ first would be better.
In the \voting scheme all points in $Q$ influence the outcome but their influence decays with increasing distance to $q$.
This prefers many \emph{near} neighbors over a single one, without using a hard cutoff (top few) to determine what near means.
For each shard $S_i$ we compute its voting power $\nu(S_i) \coloneqq \sum_{v \in R_i} e^{-\sigma d(q,v)^2}$.
Here $\sigma$ is a scaling factor to remap the distances to the interval $[0, 12]$ (an experimentally determined cutoff).
We probe shards with higher voting power first.

\clearpage

\section{Proof of Theorem 1}\label{sec:appendix:sorting-lsh}

In this section, we provide the missing proof for \cref{thm:SortLSHMain}. We first restate the theorem to accomodate both \ranking and \voting.

\addtocounter{theorem}{-1}
\begin{theorem}
Fix any approximation factor $c>1$, and let $P \subset (\R^d,\|\cdot\|_\rho)$ be a subset of the $d$-dimensional space equipped with the $\ell_\rho$ norm, for any $\rho \in [1,2]$. Set $\alpha = O( \sqrt{c\log (n ) })$ for the algorithm using \voting, and $\alpha = O(c)$ for the algorithm using \ranking. Then there exists a locally sensitive hash family $\cH$, such that with $r = O(n^{1/c})$ and window size $W = O(1)$, the following holds when running the SortingLSH routing algorithm: for any query point $q \in \R^d$,
\begin{itemize}
    \item If $m=n$, then with probability $1-1/\poly(n)$ $q$ gets routed to a shard $\bPi(p)$ containing at least one point $p \in P$ that satisfies
    $$ \|p - q \|_\rho \leq \alpha \pi_1(q) $$
    \item if $m<n$, then for any $\delta \in (0,1)$, with probability $1-\delta$ the query $q$ gets routed to a shard $\bPi(p)$ containing at least one point $p \in P$ that satisfies
    $$ \|p - q \|_\rho \leq \alpha \cdot \pi_{\lceil \log\delta^{-1}  \frac{n}{m}\rceil  }(q) $$
\end{itemize}
Where $\pi_k(q)$ is the distance from $q$ to the $k$-th nearest neighbor of $q$ in $P$.

\end{theorem}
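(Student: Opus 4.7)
The plan is to adapt the classical near-neighbor LSH analysis of Indyk--Motwani to the sorting-based retrieval and then handle subsampling by a simple tail bound. For the LSH family $\cH$, I would use a standard $(R_0, c R_0, p_1, p_2)$-sensitive family for $(\R^d, \|\cdot\|_\rho)$ with $\rho \in [1,2]$ (e.g., $p$-stable hashing of Datar et al.), which satisfies $\log(1/p_1)/\log(1/p_2) \le 1/c$. Dependence on the unknown scale $\pi_1(q)$ is resolved by running parallel SortingLSH indices at $O(\log n)$ geometric scales of $R_0$ and absorbing the extra factor into the constant in $r$.

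The combinatorial heart of the argument is a \emph{prefix-containment} observation. Write $C(x)$ for the length of the common prefix of the compound-hash strings $h_j(q)$ and $h_j(x)$ in repetition $j$. If $p^* \in R$ has $C(p^*) = k$, then any $x \in R$ whose hash lies lexicographically between $h_j(q)$ and $h_j(p^*)$ in the sorted list $I_j$ must also share the length-$k$ prefix with $q$, and hence lies in the bucket $B_k = \{ x \in R : h_{j,i}(x) = h_{j,i}(q) \text{ for all } i \le k \}$. Consequently, the position of $p^*$ in $I_j$ differs from $\tau$ by at most $|B_k| - 1$, so $p^* \in [\tau - W, \tau + W]$ whenever $|B_k| \le W$ (with constant-factor slack for boundary effects).

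For the single-repetition analysis in case (a), I would set the sketch length $t = \lceil \log m / \log(1/p_2) \rceil$; then for any point $y$ at distance $\ge c R_0$ from $q$, the expected number of such $y \in R$ with $C(y) \ge t$ is at most $m \cdot p_2^t = O(1)$, and Markov's inequality yields $\le W$ with constant probability. Simultaneously, the near neighbor $p^*$ satisfies $\Pr[C(p^*) \ge t] \ge p_1^t = m^{-\log(1/p_1)/\log(1/p_2)} \ge m^{-1/c}$. Intersecting these two events (and invoking the prefix-containment lemma) gives a single-repetition success probability $\Omega(m^{-1/c})$, where success means that some point within distance $\alpha R_0$ of $q$ lies in the window. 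With $r = O(n^{1/c})$ repetitions (with a $\log n$ factor for the $1-1/\poly(n)$ bound), a union bound amplifies this to the claimed probability, and the closest point to $q$ in the recovered set $Q$ then determines the top-ranked shard.

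For case (b), I first handle the sampling: since $R$ is a uniform size-$m$ subset of $P$, the probability that none of the top $k$ nearest neighbors of $q$ in $P$ lies in $R$ is at most $(1-m/n)^k \le e^{-km/n}$; setting this $\le \delta$ gives $k = \lceil \log(1/\delta) \cdot n/m \rceil$. Conditional on this good event, some $p' \in R$ satisfies $\|p'-q\|_\rho \le \pi_k(q)$, and applying the case-(a) argument with $p'$ as the target yields a captured point within distance $\alpha \pi_k(q)$. For the \voting variant, the looser stretch $\alpha = O(\sqrt{c \log n})$ is needed because the aggregated Gaussian vote can be dominated by a handful of far points; a union bound across $O(\log n)$ distance scales together with a standard sub-Gaussian tail argument on the total vote contributed by far points yields the weaker bound.

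The main obstacle will be translating "the LSH bucket $B_k$ is small" into "$p^*$ falls within the lexicographic window of radius $W$". Standard LSH analyses inspect an entire bucket, whereas here we only read $O(1)$ sorted entries around $\tau$; the prefix-containment observation bridges this gap, but requires care when $\tau$ sits at the boundary of the $B_k$-block in $I_j$ -- a constant-factor enlargement of $W$ handles this without changing the asymptotics.
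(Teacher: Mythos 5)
Your prefix-containment observation (that any point in $I_j$ sitting lexicographically between $h_j(q)$ and $h_j(p^*)$ shares at least the common prefix length of $q$ and $p^*$) is exactly right, and it is the same implicit step the paper uses when it says ``if $p^*$ was not added to the window, there must be another point $p'$ with $h^i_{t^*}(p') = h^i_{t^*}(q)$.'' You also make a defensible design choice that differs from the paper: you plug in a generic $(R_0, cR_0, p_1, p_2)$-sensitive family plus $O(\log n)$ geometric scales, whereas the paper first embeds into the Hamming cube (via the distortion-$O(1)$ embedding of Chen et al.) and then uses bit-sampling, which makes the collision probabilities explicit in terms of $\|\cdot\|_0$ and avoids the multi-scale overhead. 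Your treatment of the subsampled case $m<n$ matches the paper's.

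The gap is in the amplification step for case (a). You claim that in a single repetition, ``intersecting these two events \ldots\ gives a single-repetition success probability $\Omega(m^{-1/c})$,'' where one event is $\{C(p^*) \geq t\}$ (probability $\geq p_1^t \approx m^{-1/c}$) and the other is $\{|B_t \cap \text{far}| \leq W\}$ (probability $\geq 1 - 1/W$ by Markov). These two events are not independent (both are functions of the same hash draws), and the generic lower bound $\Pr[A \cap B] \geq \Pr[A] - \Pr[\neg B] \geq m^{-1/c} - 1/W$ is vacuous whenever $W = O(1)$ and $m$ is large, since $1/W$ dwarfs $m^{-1/c}$. Absent an argument that $\Pr[B \mid A]$ is bounded below by a constant (which requires controlling how conditioning on the $p^*$-collision inflates far-point collision probabilities, and is not automatic for bit-sampling or $p$-stable families), the per-repetition success probability is not established, and $r = O(n^{1/c})$ repetitions do not yield $1-1/\poly(n)$. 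The paper sidesteps this by choosing a \emph{longer} prefix threshold $t^*$ (proportional to $d' \ln n / (c\|p^*-q\|_0)$) so that the far-point collision probability per point per repetition is $1/\poly(n)$; it then union-bounds over all $n$ far points and all $r$ repetitions, obtaining an event — ``no far point ever shares the $t^*$-prefix in any repetition'' — that holds with probability $1 - 1/\poly(n)$ \emph{unconditionally}. Intersecting this with ``$p^*$ shares the $t^*$-prefix in at least one repetition'' via a simple union bound (which needs no independence) gives the claim with $W = O(1)$. To repair your argument you would need either to raise the prefix threshold so that the far-collision probability supports a union bound (which, with a generic $(p_1,p_2)$ family, costs a constant factor in the exponent and pushes $r$ above $n^{1/c}$ unless you also adjust $c$), or to explicitly bound $\Pr[B \mid A]$ for your chosen family.
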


\begin{proof}

To simplify the construction of the hash family $\cH$, we first embed the pointset $P$ into a subset of the $d'$-dimensional Hamming cube $\{0,1\}^{d'}$ with a constant distortion in distances.  Let $\Phi = \frac{\max_{x,y \in P} \|x-y\|_\rho}{\min_{x,y \in P} \|x-y\|_\rho}$ denote the aspect ratio of $P$.  By Lemma A.2 and A.3 of \cite{chen2022new}, for any $p \in [1,2]$ such an embedding $f_\rho:P \to \{0,1\}^{d'}$, with $d' = O(d \Phi \log n)$, such that there exists a constant $C$ so that with probability $1-1/\poly(n)$ for all $x,y \in P$, we have
\[ \|f_\rho(x) - f_\rho(y)\|_0  \leq  \|x-y\|_\rho  \leq C  \cdot \|f_\rho(x) - f_\rho(y)\|_0 \]
Where  $\|a-b\|_0 = |\{ i \in [d'] | a_i \neq b_i\}|$ is the Hamming distance between any two $a,b \in \{0,1\}^{d'}$. The constant $C$ will go into the approximation factor of the retrieval, but note that $C$ can be set to $(1+\eps)$ by increasing the dimension by a $O(1/\eps^2)$ factor. Thus, in what follows, we may assume that $P \subset \{0,1\}^{d'}$ is a subset of the $d'$-dimensional hypercube equipped with the Hamming distance.

We now construct the hash family $\cH$ which we will use for the SortingLSH Index. A draw $h \sim \cH$ is generated as follows: (1) First, sample $i_1,\dots,i_\ell \sim [d']$ uniformly at random, where $\ell = O(d' \log n)$, (2) for any $x \in \{0,1\}^{d'}$ we define $h(x) = (x_{i_1}, x_{i_2}, \dots, x_{i_{\ell}}) \in \{0,1\}^{\ell}$. As notation, for any $\ell' \leq \ell$ and hash function $h$ defined in the above way, we write $h_{\ell'}(x) =(x_{i_1}, x_{i_2}, \dots, x_{i_{\ell'}}) \in \{0,1\}^{\ell'}$ to denote the $\ell'$-prefix of $h$.

We are now ready to prove the main claims of the Theorem.
First, suppose we are in the case that $m=n$, and thus the pointset is \textit{not} subsampled before the construction of the SortingLSH index.
Let $p^* = \arg \min_{p \in R} \|p'-q\|_0$ be the nearest neighbor to $q$, with ties broken arbitrarily.
We first claim that $p^*$ and $q$ share a $t$-length prefix in at least one of the $i \in [r]$ repetitions of SortingLSH. Set $t^* = \frac{4 d' \ln n }{ c \|p^*-q\|_0}$.
Then the probability that $p^*,q$ share a prefix of length at least $t^*$ -- namely, the event that $h_{t^*}(p^*) = h_{t^*}(q)$, is at least
\[ \left(1 - \frac{\|p^* - q\|_0}{d'} \right)^{t^*} =   \left(1 - \frac{\|p^* - q\|_0}{d'} \right)^{\frac{(4/c) d' \log n }{\|p^*-q\|_0}} \geq \left(\frac{1}{2}\right )^{ (8/c)  \log n }= n^{O(1/c)} \]
where we used the inequality $(1-x/2)^{1/x} \geq 1/2$ for any $0 < x < 1$. Next, for any $x$ such that $\|x-q\|_0 \geq 10/c \|p^* - q\|_0$, we have:

\[ \left(1 - \frac{\|x - q\|_0}{d'} \right)^{t*} =   \left(1 - \frac{\|x - q\|_0}{d'} \right)^{\frac{(1/c) d' \log n }{\|p^*-q\|_0}} \leq 1/n^4\]

where we used the inequality that $(1-x)^{n/x} \leq (1/2)^n$ for any $x \in (0,1]$ and $n \geq 1$. Union bounding over all $r < n$ trials, it follows that with probability at least $1-1/n^3$, we never have $h_{t^*}(x) = h_{t^*}(q)$ for any $x$ such that $\|x-q\|_0 \geq 10/c \|p^* - q\|_0$, and moreover we do have that $h_{t^*}(p^*) = h_{t^*}(q)$ for at least one repetition of the sampling.

Let $h^1,h^2,\dots,h^r$ be the $r$ independent hash functions drawn for the SortingLSH routing index. By the above, there exists a repetition $i \in [r]$ such that $h^i_{t^*}(p^*) = h^i_{t^*}(q)$.

We first prove the bounds for \ranking. First, suppose that $p^*$ was added to the window. Then since $p^*$ is the closest point to $q$, by construction of the algorithm the point $q$ will be deterministically routed to $\bPi[p']$ for a point $p'$ such that $\|q-p'\|_\rho = \|q-p^*\|_\rho$, which completes the proof in this case. Otherwise, on repetition $i$, if $p^*$ was not added to the window, then since $h^i_{t^*}(p^*) = h^i_{t^*}(q)$, there must be another point $p'$ with $h^i_{t^*}(p') = h^i_{t^*}(q)$ on that repetition. By the above, such a point must satisfy $\|q-p'\|_\rho\leq O(c) \|q-p^*\|_\rho$ as needed.

We now consider the case of \voting. Again, for the same step $i$, we recover at least one point $p'$ such that $h^i_{t^*}(p') = h^i_{t^*}(q)$, and thus $\|q-p'\|_\rho\leq  \|q-p^*\|_\rho$. Normalize the votes so that $e^{-\sigma^2 \|p'-q\|_\rho^2} = 1$. Then note that any point $p''$ with $\|p''-q\|_\rho > O(\sqrt{ \log rW }) \|p'-q\|_\rho$ will contribute a total voting weight of at most $\frac{1}{2rW}$.
Since there are at most $rW$ such points, the total voting weight to points to shards potentially not containing a $\alpha$-nearest neighbor, for $\alpha = (c \sqrt{\log Rw }) = O(c \sqrt{\log n^{1/c}})$,  is at most $1/2$. It follows that $q$ is routed to a shard which contains a $O(\alpha)$-approximate nearest neighbor as needed.

Finally, the case of $m<n$ follows by noting that the $O(\log(1/\delta)n/m)$-th nearest neighbor to $q$ will survive after sub-sampling $m$ points from $n$ with probability at least $1-\delta/2$, in which case the above analysis applies.

\end{proof}

\section{Datasets}
We utilize three billion-size datasets from the  big-ann-benchmarks competition framework~\cite{big-ann-benchmarks}.
The \emph{DEEP} (DEEP) dataset released by Yandex consists of a billion image descriptors of the
projected and normalized outputs from the last fully-connected layer of
the GoogLeNet model, which was pretrained on the Imagenet classification task~\cite{babenko2016efficient}. DEEP uses the $\ell_2$ distance.
The \emph{Turing} dataset released by Microsoft consists of 1B Bing queries encoded by the Turing v5 architecture
that trains  Transformers to capture similarity of intent in web search queries; the dataset
was released as part of the big-ann benchmarks competition in 2021~\cite{big-ann-benchmarks}.
Turing also uses the $\ell_2$ distance.
The \emph{Text-to-Image} dataset released by Yandex Research, consists of a set of images
embedded using the SeResNext-101 model, and a set of textual queries
embedded using a DSSM model. Its vectors are represented using 4 byte
floats in 200 dimensions~\cite{baranchuk2021benchmarks}. Text-to-Image uses inner product as the similarity measure.

We also report some results on the SIFT1M and GLOVE datasets. The SIFT dataset which
consists of SIFT image similarity
descriptors applied to images~\cite{big-ann-benchmarks, jegou2011searching, douze2011product}.
It is encoded as 128-dimensional vectors using 4 byte floats per vector entry, and uses the $\ell_2$ distance.
The GLOVE dataset consists of 1.18M 100-dimensional word embeddings, equipped with angular distance~\cite{pennington2014glove}.

\section{Baselines}\label{appendix:baselines}

In this appendix, we provide additional information on how the baselines $k$-means clustering (KM), balanced $k$-means (BKM) clustering~\cite{de2023balanced} and Pyramid~\cite{pyramid} are implemented.
Additionally, we discuss how we adapt $k$-means clustering to datasets with inner product as the similarity measure, as the standard algorithm works primarily for $l_2$ distance.

\subsection{K-Means}

Our $k$-means implementation is a standard version of LLoyd's algorithm~\cite{Lloyd82} with 20 rounds and random points from the pointset as initial centroids.

\paragraph{Balancing.}

While $k$-means clusters are already fairly balanced, they often do not adhere to the strict shard size limit imposed to satisfy memory constraints.
For this baseline, we tested two approaches to rebalance a $k$-means clustering: 1) remigrating points from overloaded clusters to their second closest center (etc.) and 2) splitting overloaded clusters recursively with $k$-means.
Remigrating points from heavy clusters works better for QPS but worse for partition quality (recall with the routing oracle). Splitting heavy clusters works better for partition quality but worse for QPS because splitting a cluster takes away an available replica to counteract load imbalance.
In the experiments we opted for reporting results with remigrating points since the focus is on the throughput evaluation.

\paragraph{Inner product similarity.}

By default $k$-means optimizes for squared $\ell_2$ distance within clusters.
The cluster assignment step in LLoyd's algorithm can still be used to optimize for inner product similarity.
The centroid aggregation step however differs.
We follow the spherical $k$-means approach~\cite{SphericalKM} of normalizing centroids after each iteration.
More precisely, the best performing approach in preliminary tests was to rescale each centroid to the average norm in its cluster.

In the experiments we use one dataset with inner product search (TTI) and one with angular distance (GloVe).
While TTI is not normalized, the vector norms are all within a factor of roughly 2 from each other.
ANNS with angular distance corresponds to maximum inner product search when the points are normalized.
Therefore, the above approach correctly rescales centroids to unit norm for the case of angular distance.
For a more detailed discussion on how to approach maximum inner product search, we refer to~\cite{sun2023soar} and~\cite{douze2024faiss}.

\subsection{Balanced $k$-means}

Our implementation of balanced $k$-means is based on a recent paper~\cite{de2023balanced} and their publicly available implementation at \url{https://github.com/uef-machine-learning/Balanced_k-Means_Revisited}, including their parameter choices.
This method is more scalable than for example the well-known balanced $k$-means with the Hungarian method~\cite{BKM-Hungarian}.

We use standard $k$-means with Lloyd's algorithm to initialize the clusters.
As long the clustering is not balanced, we then perform iterations of the algorithm by~\cite{de2023balanced}.
Essentially, it is a version of Lloyd's assignment algorithm with a penalty term on heavy clusters added to the cost function.
The improvement over previous methods is that the trade-off between the $k$-means objective and balance is automatically tuned over the course of the execution, and no longer in the hands of the user.
In each round the penalty is carefully adjusted to ensure that at least a small amount of points remigrate, to ensure progress towards balance, while preserving the cluster structure as much as possible.

We parallelize BKM by moving points to their preferred cluster in small synchronous sub-rounds in parallel.
The centroids are only updated after each sub-round.
To keep the centroids representative of their cluster, we use a large number of sub-rounds (1000).

\subsection{Pyramid}\label{sec:appendix:pyramid-imbalanced}

In the following we describe the partitioning and routing used in Pyramid~\cite{pyramid}.
First the dataset $P$ is randomly sub-sampled to a smaller pointset $P'$.
$P'$ is then further aggregated via flat $k$-means to $\hat{P}$.
On $\hat{P}$ a (HNSW) graph is built, which is partitioned into balanced shards using a graph partitioning algorithm.
This HNSW graph is used as the routing index -- all shards with points visited in a query are probed.
Therefore, the beamwidth of the search affects the number of probes.
The partition of $\hat{P}$ is extended to $P$ by assigning each point to the shard of its nearest neighbor in $\hat{P}$.
Therefore, the partitioning method is directly tied to the routing method.

The source code of Pyramid is not available.
Fortunately, their method is simple such that we could reimplement it in our framework, using the same parameters as in the paper~\cite{pyramid-arxiv} ($|\hat{P}| = 10000$).
Note that we achieved better partitions and query throughput by building and partitioning a $k$-NN graph on $\hat{P}$ rather than partitioning the HNSW graph. For routing we still use the HNSW graph built on $\hat{P}$.

Because of the last assignment step, the shards are highly imbalanced.
In Table~\ref{tbl:pyramid-imbalance} we report that Pyramid exhibits between $24.8\% - 88.8\%$ imbalance, whereas our method with graph partitioning achieves the desired $5\%$ imbalance, while having $0.7\% - 10.2\%$ better recall in the first shard with the routing oracle.
If we enforce a $5\%$ imbalance for Pyramid by reassigning points to the closest cluster below the size constraint, recall drops by $0.7\% - 4.1\%$.
To keep the comparison fair, we use this balanced version in the experiments.

\begin{table}[H]
	\centering
	\caption{Imbalance and first shard recall with the routing oracle for our method and Pyramid.}\label{tbl:pyramid-imbalance}
	\vspace{0.22cm}
	\begin{tabular}{l l r r}
		dataset & algorithm & max shard size & first shard recall / routing oracle \\ \midrule
		DEEP & GP & 26.25M & 85\% \\
		DEEP & Pyramid & 33.4M & 81.3\% \\
		DEEP & Pyramid balanced & 26.25M & 77.2\% \\
		\midrule
		Turing & GP & 26.25M & 86.1\% \\
		Turing & Pyramid & 31.2M & 75.9\% \\
		Turing & Pyramid balanced & 26.25M & 75.2\% \\
		\midrule
		Text-to-Image & GP & 26.25M & 81.2\% \\
		Text-to-Image & Pyramid & 47.2M & 80.5\% \\
		Text-to-Image & Pyramid balanced & 26.25M & 76.9\% \\

	\end{tabular}
\end{table}

\section{Algorithm Configuration}\label{appendix:configuration}

In this appendix, we present the parameters used in our evaluation.

\subsection{Graph-Building Configuration}

For $k$-NN graph building with dense ball clusters we perform three independent repetitions.
We assign points to the three closest pivots on the top recursion level, and to the one closest pivot on levels below.
We call this parameter \emph{fanout} (only used on the top level).
As maximum cluster size we use 2500.
The number of pivots is set as a constant on the top recursion level (950) and as a fraction of the remaining number of points (0.005) on the lower levels.

For the experiments in Figure~\ref{plot:graph_quality}, we sweep through all combinations of the following values: repetitions $\in \{2,3,5,8,10\}$, fanout $\in \{2,3,5,8,10\}$, and cluster size $\in \{ 500, 1000, 2000, 5000, 10000\}$ to obtain graphs with different quality scores.

\subsection{Configuration for Big-ANN benchmarks}

For \kmeansrouting on big-ann benchmarks with $|P|=1B$ points, we use a cluster size threshold of $\lambda = 350$, number of centroids $l = 64$ and tree search budget $b=50K$.
In preliminary experiments, we tested different parameter settings and found the results were not sensitive to $\lambda$ and $l$ which is why these results are excluded here, whereas $b$ does influence routing quality.
Note that we do not explore different budgets $b$ here, as \kmeansrouting with tree-search is only used in Section~\ref{sec:experiments:perfect-inshard-query}.
For the throughput evaluation we use HNSW instead to achieve faster routing times.
To explore different routing quality configurations, we therefore vary the size $m$ of the set of coarse reprentatives $R$, testing $m \in \{ 20K, 100K, 200K, 500K, 1M, 2M, 5M, 10M \}$.

The HNSW configuration for routing uses a degree of $32$ in the search-graph and beamwidth $\text{ef\_construction} = 200$ for insertion.
To explore different routing quality trade-offs we vary the beamwidth during search $\text{ef\_search} \in \{20,40,80,120,200,250,300,400,500 \}$.
For $m=5M$ the routing latency on MS Turing with OGP ranges from 0.11ms to 1.19ms, reaching 64.8\% to 81.7\% of ground-truth neighbors in the rop-ranked shard.

Note that each tested parameter value appears in at least one Pareto-optimal configuration.
The best performing configurations on the Pareto front may combine a low-quality choice for $m$ and a high-quality choice for $\text{ef\_search}$ or vice versa.

For in-shard search, we also use degree $32$ and $\text{ef\_construction} = 200$.
For different in-shard search recall trade-offs we vary $\text{ef\_search} \in \{50,80,100,150,200,250,300,400,500 \}$.

Given these configurations, we can analyze the memory footprint per host.
On the MS Turing dataset with 100-dimensional 4 byte vectors, the total memory required to host the 26.25M points in a shard is roughly $12.9$GB, consisting of $9.78$GB $ = 26.25 \cdot 10^6 \cdot 100 \cdot 4$ bytes for the vectors plus $3.12$GB $ = 26.25 \cdot 10^6 \cdot 32 \cdot 4$ bytes for the edges of the HNSW search graph.
Similarly the routing index consumes between 10MB (for $m=20K$) and $4.9$GB (for $m=10^7$) of memory.
In total the maximum memory per host is thus $17.8$GB, fitting comfortably into typical cluster machines which have around 100GB of memory.

For routing with \lshrouting we test the number of repetitions $r \in \{ 1,4,8,16,24 \}$ and window size $W \in \{ 50,  100,  200,  400, 1000, 2000, 4000 \}$. We use the same values for $m$ as with \kmeansrouting.
The results in Figure~\ref{fig:oracle-lsh} use the best performing configuration with search budget $b = r \cdot 2W \leq 50K$.

\subsection{Configuration for SIFT and GloVe}

On SIFT and GloVe we use a smaller router and HNSW graph with \kmeansrouting.
We set the router size $m=50K$, the search budget $b=5K$, number of centroids $l=32$ and cluster size threshold $\lambda =200$.
The HNSW graphs use degree $16$ and $\text{ef\_construction} = 200$. For routing we use $\text{ef\_search} = 60$, whereas for in-shard search we use $\text{ef\_search} = 120$.

\end{document}